\newlength{\parindentsave}\setlength{\parindentsave}{\parindent}
\begin{document}

% DO NOT REMOVE: Creates space for Elsevier logo, ScienceDirect logo
% and ENDM logo
\begin{verbatim}\end{verbatim}\vspace{2.5cm}

\begin{frontmatter}

\title{Efficient and Perfect domination on circular-arc graphs}

\author{Min Chih Lin$^{3,1}$\thanksref{thxoscar}\thanksref{oscaremail} \qquad\qquad Michel J Mizrahi}
\address{CONICET, Instituto de C\'alculo and Departamento de Computaci\'on\\ Universidad de Buenos Aires, Buenos Aires, Argentina}

%\author{Michel J Mizrahi\thanksref{thxoscar}\thanksref{michelemail}}
%\address{CONICET, Instituto de C\'alculo and Departamento de Computaci\'on\\ Universidad de Buenos Aires\\ Buenos Aires, Argentina}

\author{Jayme L Szwarcfiter\thanksref{thxjayme}\thanksref{jaymeemail}}
\address{Inst de Matem\'atica, COPPE and NCE  \\ Universidade Federal do Rio de Janeiro,
%Instituto Nacional de Metrologia, Qualidade e Tecnologoa \\
Rio de Janeiro, Brazil}

\thanks[oscaremail]{Email:
   \href{mailto:oscarlin@dc.uba.ar} {\texttt{\normalshape
   \{oscarlin,mmizrahi\}@dc.uba.ar}}}
\thanks[jaymeemail]{Email:
   \href{mailto:jayme@nce.ufrj.br} {\texttt{\normalshape
   jayme@nce.ufrj.br}}}
   
\thanks[thxoscar]{Partially supported by UBACyT Grant 20020120100058, PICT ANPCyT
Grants 2010-1970 and 2013-2205.}
\thanks[thxjayme]{Partially supported by CNPq, CAPES and FAPERJ, brazilian research agencies. Presently visiting the Instituto Nacional de Metrologia, Qualidade e Tecnologia", Brazil.}

%\author{My Name\thanksref{ALL}\thanksref{myemail}}
%\address{My Department\\ My University\\ My City, My Country}
%
%\author{My Co-author\thanksref{coemail}}
%\address{My Co-author's Department\\ My Co-author's University\\
%   My Co-author's City, My Co-author's Country} \thanks[ALL]{Thanks
%   to everyone who should be thanked} \thanks[myemail]{Email:
%   \href{mailto:myuserid@mydept.myinst.myedu} {\texttt{\normalshape
%   myuserid@mydept.myinst.myedu}}} \thanks[coemail]{Email:
%   \href{mailto:couserid@codept.coinst.coedu} {\texttt{\normalshape
%   couserid@codept.coinst.coedu}}}

\begin{abstract}
Given a graph $G = (V,E)$, a \emph{perfect dominating set} is a subset of vertices $V' \subseteq V(G)$ such that each vertex $v \in V(G)\setminus V'$ is dominated by exactly one vertex $v' \in V'$. An \emph{efficient dominating set} is a perfect  dominating set $V'$ where $V'$ is also an independent set.
These problems are usually posed in terms of edges instead of vertices. Both problems, either for the vertex or edge variant, remains NP-Hard, even when restricted to certain graphs families.
We study both variants of the problems for the circular-arc graphs, and show efficient algorithms for all of them.
\end{abstract}
\begin{keyword}
Efficient Domination, Perfect Domination, Circular-Arc graphs
\end{keyword}

\end{frontmatter}

\section{Introduction}\label{intro}
 
Given a graph $G = (V,E)$, a {\bf perfect dominating set} is a subset of vertices $V' \subseteq V(G)$ such that each vertex $v \in V(G)\setminus V'$ is dominated by exactly one vertex $v' \in V'$.
An {\bf efficient dominating set} is a perfect vertex dominating set $V'$ where $V'$ is also an independent set.
Every graph $G$ contains a perfect dominating set, for instance, take $V(G)$. But not every graph contains an efficient vertex dominating set.
These problems consists in searching the sets with minimum number of vertices. All of them are NP-hard, even when restricted to certain graph families.
The weighted version of these problems, where each vertex $v$ has a weight assigned $\omega(v)$, consists on finding a perfect 
vertex dominating set where the sum of the weights is minimum. We denote these problems as Minimum Weighted Perfect Vertex Domination (MWPVD), 
Minimum Weighted Efficient Vertex Domination (MWEVD). 
We denote the edge-versions of these problems as Minimum Weight Perfect Edge Domination (MWPED) and Minimum Weight Efficient Edge Domination (MWEED). Efficient edge dominating sets are also known as dominating induced matchings,
and denoted as DIM's.
Note that for these edge-versions the dominating set consists of edges instead of vertices, hence the weights are on the edges, and the adjacency of two edges
is defined as two edges that shares a vertex.
We say a \emph{pendant} vertex (also known as \emph{leaf}) is one whose degree is exactly one. 
In this paper we show results for the weighted perfect domination problem, and for the efficient domination problem, restricted to circular-arc graphs.

\section{Circular-Arc graphs}
The following definitions and results come from \cite{DuranLMS06}%,DuranAlgCT}.

Given a circular-arc model  $\mathcal{M}=(C, \mathcal{A})$ where $\mathcal{A}=\{A_1=(s_1,t_1),\dots, A_n=(s_n,t_n)\}$, two points $p,p' \in C$ are equivalent if $\mathcal{A}(p)=\mathcal{A}(p')$. 
The $2n$ extreme points from the $n$ arcs of $\mathcal{A}$ divide the circle $C$ in $2n$ segments of the following types: (i) $(s_i,t_j)$ (ii) $[t_i,t_j)$ (iii) $(s_i,s_j]$ (iv) $[t_i,s_j]$. We say the segments of type (i) are \textit{intersection segments}.
It is easy to see that all points inside one of the $2n$ segments are equivalent. 

\begin{corollary}\cite{DuranLMS06}%,DuranAlgCT}
There are at most $2n$ distinct $\mathcal{A}(p)$.
\end{corollary}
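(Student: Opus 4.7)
The plan is to leverage the partitioning observation immediately preceding the statement and essentially count: the circle $C$ is cut into $2n$ pieces by the $2n$ endpoints of the arcs in $\mathcal{A}$, and on each piece the function $p \mapsto \mathcal{A}(p)$ is constant; therefore the image of this function has at most $2n$ elements.

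First, I would verify that the four segment types (i)--(iv) described in the text actually form a partition of $C$. Each of the $2n$ extreme points $s_i, t_i$ is either a left or right endpoint of some arc, and by the convention adopted (arcs of the form $(s_i,t_i)$), a point $t_j$ belongs to a segment of type (ii) or (iv), while a point $s_j$ belongs to a segment of type (iii) or (iv); the open intersection-segments of type (i) cover the interior regions between consecutive extremes where an arc starts to its right. Since the endpoints and the $2n$ open/half-open arcs between them exhaust $C$ exactly once, we have a genuine partition into $2n$ pieces.

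Next, I would justify that two points lying in the same segment $S$ give the same set $\mathcal{A}(p)$. The key is that an arc $A_k = (s_k,t_k)$ contains $p$ or not based only on whether $p$ lies in the open arc between $s_k$ and $t_k$; since $S$ contains no endpoint in its interior and, by construction, the relative position of $S$ with respect to every $s_k$ and $t_k$ is fixed, membership $p \in A_k$ is determined by $S$ alone, not by the specific $p \in S$. Hence $\mathcal{A}$ is constant on each segment.

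Combining the two steps gives at most $2n$ distinct values of $\mathcal{A}(p)$. I do not foresee a real obstacle here; the only mild care required is the bookkeeping of open versus closed endpoints so that the segments genuinely partition $C$ and the endpoints themselves are assigned to a unique segment. The conclusion then follows immediately by the pigeonhole principle over the $2n$ equivalence classes.
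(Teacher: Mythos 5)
Your argument is exactly the one the paper intends: the corollary is stated as an immediate consequence of the preceding observation that the $2n$ extreme points cut $C$ into $2n$ segments of types (i)--(iv) on each of which $p \mapsto \mathcal{A}(p)$ is constant, so the image has at most $2n$ values. Your proposal correctly fills in the bookkeeping (the segments partition $C$ and constancy on each segment) and matches the paper's approach.
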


\begin{lemma}\cite{DuranLMS06}\label{lemmaLMS}
Given a CA model $\mathcal{M}=(C,\mathcal{A})$, if there are no two or three arcs of $\mathcal{A}$ that covers the entire circle $C$ then $\mathcal{M}$ is an HCA model.
\end{lemma}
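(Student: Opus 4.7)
My plan is to argue by contradiction. Suppose $\mathcal{M}$ is not HCA. Then by definition there is a subfamily $\mathcal{F}\subseteq\mathcal{A}$ of pairwise intersecting arcs with $\bigcap\mathcal{F}=\emptyset$. Choose $\mathcal{F}=\{A_1,\dots,A_k\}$ to be minimal (under inclusion) with this property. Since two arcs that pairwise intersect trivially share a point, we must have $k\geq 3$. The goal is then to exhibit two or three arcs of $\mathcal{A}$ that cover $C$, contradicting the hypothesis.

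The key construction is to extract one distinguished ``hole'' per arc. By minimality, $\bigcap_{j\neq i}A_j\neq\emptyset$ for every $i$, so I pick $p_i$ in that intersection; since $\bigcap\mathcal{F}=\emptyset$, necessarily $p_i\notin A_i$. The points $p_1,\dots,p_k$ are pairwise distinct, because $p_i\in A_j$ while $p_j\notin A_j$. After relabelling I may assume they appear in cyclic order on $C$. Now $A_i$ is an arc containing every $p_j$ with $j\neq i$ but missing $p_i$, so its complement $C\setminus A_i$ is an arc that contains $p_i$ and no other $p_j$; hence $C\setminus A_i$ lies strictly inside the open sub-arc from $p_{i-1}$ to $p_{i+1}$ passing through $p_i$ (indices mod $k$).

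I then split on $k$. If $k=3$, the three arcs must already cover $C$: otherwise a point $q\in C$ outside $A_1\cup A_2\cup A_3$ lets me cut $C$ at $q$, turning $A_1,A_2,A_3$ into three pairwise intersecting intervals on a line, which by the Helly property for intervals would share a common point, contradicting $\bigcap\mathcal{F}=\emptyset$. Thus $\{A_1,A_2,A_3\}$ covers $C$, contradicting the hypothesis. If $k\geq 4$, I claim that already $A_1$ and $A_3$ cover $C$. Indeed, $C\setminus A_1$ lies in the open sub-arc from $p_k$ to $p_2$ through $p_1$, while $C\setminus A_3$ lies in the open sub-arc from $p_2$ to $p_4$ through $p_3$; these two open sub-arcs are disjoint, since on one side they meet only at the excluded point $p_2$, and on the other side they are separated by $p_4,\dots,p_k$. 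Hence $(C\setminus A_1)\cap(C\setminus A_3)=\emptyset$, i.e.\ $A_1\cup A_3=C$, again contradicting the hypothesis.

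The step I expect to require the most care is the $k\geq 4$ case: I need to rigorously justify that $C\setminus A_i$ really sits inside the thin sector around $p_i$ (using that $C\setminus A_i$ is a single arc containing $p_i$ and no other $p_j$), and also dispose of mild degeneracies such as $A_i=C$ (which would contradict minimality of $\mathcal{F}$) or $\bar A_i=\emptyset$. Once those are in place, the disjointness of the two complementary sub-arcs is immediate from the cyclic ordering of the $p_i$, and both cases close the contradiction, yielding that $\mathcal{M}$ is HCA.
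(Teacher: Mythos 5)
Your proof is correct. Note that the paper itself gives no proof of this lemma --- it is imported verbatim from \cite{DuranLMS06} --- so there is no in-paper argument to compare against; your minimal non-Helly-subfamily argument (a minimal pairwise-intersecting family with empty intersection has size $k\ge 3$; for $k=3$ the three arcs must cover $C$ by cutting at an uncovered point and applying Helly for intervals, and for $k\ge 4$ the complements of $A_1$ and $A_3$ lie in disjoint open sectors around $p_1$ and $p_3$, so $A_1\cup A_3=C$) is essentially the standard proof of this fact, and the steps you flag as delicate (connectedness of $C\setminus A_i$, exclusion of $A_i=C$ via $p_i\notin A_i$, disjointness of the two sectors even when $k=4$) all go through.
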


%%%%%%%%%%%%%CAResults

Given a circular-arc model $\mathcal{M}=(C,\mathcal{A})$, the following algorithms can be achieved in $O(n)$ time:

\begin{itemize}
\item Search a universal arc. A universal arc has common intersection with every arc from $\mathcal{A}$.
\item Search two arcs $A_i, A_j$ such that $A_i \cup A_j =C$.
\item Search three arcs $A_i, A_j, A_k$ such that $A_i \cup A_j \cup A_k=C$.
\item Search a point $p\in C$ such that $|\mathcal{A}(p)|$ is maximum or minimum.

\item If $max_{p \in C}|\mathcal{A}(p)|=2$ and $min_{p \in C}|\mathcal{A}(p)|=1$ and do not exists two arcs from $\mathcal{A}$ that covers the entire circle $C$ then exists
an induced cycle $C_k$ with $k \ge 3$ such that arcs corresponding to the vertices cover the circle $C$ in $\mathcal{M}$ and the rest of the arcs from $\mathcal{A}$
are pairwise disjoint and are contained in exactly one of the arcs from $C_k$. Thus each arc is either part of the $C_k$ or a leaf with a parent arc from $C_k$.
It is possible to identify each arc from the $C_k$.
\end{itemize}

Note that $max_{p \in C}|\mathcal{A}(p)|>min_{p \in C}|\mathcal{A}(p)|$ if $\mathcal{A}\ne \emptyset$. 
For instance, let $A = (s,t) \in \mathcal{A}$, $\mathcal{A}(s) \subset \mathcal{A}(s+\epsilon)$ and $A\in \mathcal{A}(s+\epsilon)\setminus\mathcal{A}(s)$. 
Hence $max_{p \in C}|\mathcal{A}(p)|\geq |\mathcal{A}(s+\epsilon)|>|\mathcal{A}(s)| \geq min_{p \in C}|\mathcal{A}(p)|$.

%%%%%%%%%%%%%%%%%%%%%%

We consider the four variants of the mentioned problems for circular-arc graphs, we
propose efficient algorithms to solve them, except for the MWEVD for which already exists a linear time algorithm. 
We assume that our input is a circular-arc graph $G$ and a weight function $\omega$ over the vertices or edges depending on the problem we were solving.
For simplicity, we may use the circular-arc model from $G$. For these cases, there is an implicit previous step which is applying a linear time algorithm
\cite{McConnell} in order to obtain a circular-arc model from $G$.
For MWEVD and MWEED, we consider the function $\omega$ nonnegative. 
%Since every efficient vertex (edge) dominating set of a graph contains exactly the same amount of vertices (edges), it is possible to substract to values from $\omega$ the minimum weight of the vertices (edges). This does not work for the MWPVD and MWPED cases.

\section{Minimum weighted efficient vertex domination}

The minimum weighted efficient vertex domination problem (MWEVD)
on a graph $G$ can be expressed as an instance of the minimum weighted
dominating set problem (MWDS) on the same graph $G$ making some minor
adjustments on the way described in \cite{Br-Le-Ra} for unweighted version of MWEVD.
There is an $O(n+m)$ time algorithm to solve MWDS for circular-arc graphs \cite{Chang:98}. 
Hence, MWEVD can be solved for circular-arc graphs in linear time.

\section{Minimum weighted efficient edge domination}

The Minimum weighted efficient edge domination problem (MWEED) for a graph $G$ is equivalent to either:
\begin{enumerate}
	\item MWEVD for the line graph $L(G)$
	\item MWIS for the square of the line graph $L^2(G)$
\end{enumerate}

If $G$ is a circular-arc graph, then the graph $G' = L^2(G)$ is also a circular-arc graph. The graph $G'$ has exactly $m$ vertices and 
up to $O(m^2)$ edges. The algorithm from \cite{spinrad2003efficient} solves MWIS problem in linear time for circular-arc graphs. 
%We show an upper bound on the number of edges for circular-arc graphs that admits a DIM.
It is known that a graph that admits a DIM is $K_4$-free. This property imposes a bound on the number of edges in circular-arc graphs, thus the algorithm complexity
bound can be improved from $O(m^2)$ to $O(n^2)$.
%(see next lemma whose proof is in subsection \ref{Proof1} of the appendix)

\begin{lemma}\label{lem:bound1}
\cite{LinMS14} Every $K_4$-free graph $G$ where $n \geq 2$ such that $G$ is either chordal or interval graph has at most $2n-3$ edges. 
\end{lemma}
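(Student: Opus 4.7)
The plan is to exploit the perfect elimination ordering that every chordal graph admits (and since interval graphs are chordal, a single argument handles both cases). Recall that a perfect elimination ordering is a linear ordering $v_1,\dots,v_n$ of $V(G)$ such that, for every $i$, the set $N^+(v_i):=N(v_i)\cap\{v_{i+1},\dots,v_n\}$ of later neighbors of $v_i$ induces a clique in $G$.

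First I would observe the key consequence of $K_4$-freeness: for each $i$, the set $\{v_i\}\cup N^+(v_i)$ is a clique in $G$, so the assumption that $G$ contains no $K_4$ forces $|N^+(v_i)|\le 2$. Next, I would count edges by summing later-neighbor degrees, using the identity $|E(G)|=\sum_{i=1}^{n}|N^+(v_i)|$.

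The small but essential refinement is to be careful at the tail of the ordering: $v_n$ trivially has $|N^+(v_n)|=0$, and $v_{n-1}$ has $|N^+(v_{n-1})|\le 1$ because its only possible later neighbor is $v_n$. Combining these with the bound $|N^+(v_i)|\le 2$ for $i\le n-2$ gives
\[
|E(G)|=\sum_{i=1}^{n}|N^+(v_i)|\;\le\;2(n-2)+1+0\;=\;2n-3,
\]
which is precisely the claim. For interval graphs the same proof applies verbatim, since interval $\Rightarrow$ chordal.

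There is really no serious obstacle here: the entire argument rests on the existence of a perfect elimination ordering and the trivial book-keeping at the last two positions. If anything, the only point one must be a bit attentive about is verifying that the hypothesis $n\ge 2$ is needed so that the term $|N^+(v_{n-1})|$ is meaningful; for $n=2$ the bound reads $|E(G)|\le 1$, which matches the fact that a $K_4$-free (indeed any) graph on two vertices has at most one edge.
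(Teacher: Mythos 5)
Your argument is correct: the perfect elimination ordering gives $|N^+(v_i)|\le 2$ for every $i$ by $K_4$-freeness, the tail terms contribute at most $1+0$, and summing yields $2n-3$; the identity $|E(G)|=\sum_i|N^+(v_i)|$ and the reduction of the interval case to the chordal case are both sound. Note that the paper itself does not prove this lemma --- it is quoted from the cited reference \cite{LinMS14} --- so there is no in-paper proof to compare against; your proof is the standard self-contained argument one would expect for this bound.
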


%\newsavebox{\lemboxA}\begin{lrbox}{\lemboxA}
%\begin{minipage}{\textwidth}\setlength{\parindent}{\parindentsave}
%\vspace{\abovedisplayskip}
\begin{lemma}\label{lem:bound2}
Every $K_4$-free graph $G$ where $|V(G)| \geq 2$ such that $G$ is circular-arc graph has at most $2n$ edges. 
\end{lemma}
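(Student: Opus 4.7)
The plan is to bound $|E(G)|$ by a direct counting argument over the starting points of the arcs. Fix any circular-arc model $\mathcal{M}=(C,\mathcal{A})$ of $G$ with arcs $A_j=(s_j,t_j)$ for $j=1,\ldots,n$, and, after an arbitrarily small perturbation if necessary, assume the $2n$ endpoints are pairwise distinct and no arc equals the whole circle $C$. For each index $j$ define
\[
f(j)\;=\;\bigl|\{i\,:\,i\neq j,\ s_j\in A_i\}\bigr|,
\]
the number of other arcs whose open interior contains the starting point of $A_j$. The goal is to show $T:=\sum_{j=1}^{n}f(j)\leq 2n$ and then that $|E(G)|\leq T$.

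First I would show $f(j)\leq 2$ for every $j$ using $K_4$-freeness. Let $s_j^+$ be a point obtained by moving an infinitesimal counterclockwise step from $s_j$. By general position, the arcs containing $s_j^+$ are exactly $A_j$ itself (its open interior contains a right-neighborhood of $s_j$) together with every $A_i\neq A_j$ whose open interior already contains $s_j$ (and therefore a full neighborhood of it). This family shares the common point $s_j^+$, so it is a clique of $G$; since $G$ is $K_4$-free its size is at most $3$, giving $1+f(j)\leq 3$ and hence $f(j)\leq 2$. Summing over $j$, $T\leq 2n$.

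Next I would show that every edge of $G$ contributes at least $1$ to $T$, that is, whenever $A_i\cap A_j\neq\emptyset$ we have $s_i\in A_j$ or $s_j\in A_i$. Suppose for contradiction that neither holds. Then $s_i$ lies strictly in the complement of $A_j$, and starting from $s_i$ and traversing $A_i$ counterclockwise one begins outside $A_j$. Since $A_i$ meets $A_j$, at some point along this traversal the path must cross the boundary $\{s_j,t_j\}$ from the complement of $A_j$ into its interior. Going counterclockwise, $t_j$ is an \emph{exit} of $A_j$ (inside just before, outside just after), so the entry necessarily occurs at $s_j$, forcing $s_j\in A_i$ and contradicting the hypothesis. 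Therefore each edge is witnessed at least once in $T$, and the chain $|E(G)|\leq T\leq 2n$ finishes the proof.

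The step I expect to require the most care is the witnessing argument, since degenerate configurations (coincident arc endpoints, two arcs sharing a common starting point, or pairs of arcs whose union covers all of $C$) must be disposed of by a careful open/closed convention together with the general-position perturbation; this perturbation only shifts endpoints by an amount smaller than any gap between distinct endpoints and therefore preserves both $G$ and the inequality.
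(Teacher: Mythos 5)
Your proof is correct, and it takes a genuinely different route from the paper's. The paper cuts the circle at a point $p$ of maximum depth (which $K_4$-freeness and non-intervality force to satisfy $2\leq|\mathcal{A}(p)|\leq 3$), performs a case analysis on whether the deepest segment has the form $(s_i,t_i)$ or $(s_i,t_j)$ with $i\neq j$, and in each case tracks how many edges are gained or lost under the surgery so as to invoke the bound $2n-3$ for $K_4$-free interval graphs (Lemma \ref{lem:bound1}). You instead double-count incidences $(i,j)$ with $s_j\in A_i$: the clique of arcs through $s_j^+$ has size $1+f(j)\leq 3$, so each starting point is covered by at most two other arcs, and your ``first entry at $s_j$'' argument (which is sound -- going counterclockwise one can only enter the open arc $(s_j,t_j)$ at $s_j$, so if $s_i\notin A_j$ and $A_i\cap A_j\neq\emptyset$ then $s_j\in A_i$) shows every edge is witnessed at least once, giving $|E(G)|\leq T\leq 2n$ directly. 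Note that you only use the trivial direction of the clique/common-point correspondence (arcs through a common point form a clique), so the lack of the Helly property causes no trouble. Your argument is self-contained (no dependence on the interval-graph lemma), avoids the case analysis, and generalizes immediately to show that $K_{r+1}$-free circular-arc graphs have at most $(r-1)n$ edges; what it gives up is the slightly sharper constants ($2n-1$, $2n-2$) that the paper extracts in some of its subcases. The only housekeeping to make explicit is that the model can be taken with all $2n$ endpoints distinct and all arcs proper nonempty open arcs, which is standard and which your closing remark on perturbation already addresses.
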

%\end{minipage}\end{lrbox}\noindent\usebox{\lemboxA}\vspace{\abovedisplayskip}

\begin{proof}
If $|V(G)|=1$ is trivially true, and if $G$ is an interval graph the property is satisfied by lemma \ref{lem:bound1}.
Thus, suppose $|V(G)| \geq 4$ and $G$ is not an interval graph.
If exists $p$ such that $|\mathcal{A}(p)| = 0$ then $G$ is an interval graph and if $|\mathcal{A}(p)|\geq 4$ then $G$ contains a $K_4$. 
In consequence, $1 \leq | \mathcal{A}(p)  | \leq 3$ for any point $p$. Let $p$ be the point with maximum value of $| \mathcal{A}(p) |$. Clearly, $p$ belongs to an intersection segment $(s_i,t_j)$ where $| \mathcal {A}(s_i)| = | \mathcal{A}(p) |-1$. Hence, $| \mathcal {A}(p)|\geq 2$ and we can restrict the study to $2 \leq |\mathcal{A}(p)| \leq 3$.
It is easy to see that at least one of these exists.  
We analyze each possible case:

\begin{itemize}
	\item $|\mathcal{A}(p)| = 2$: Let $(s_i,t_j)$ the intersection segment that contains $p$:
	
	\vspace{3mm}
	\begin{enumerate}
		\item $i = j$: In this case $\mathcal{A}(p) = \{A_i,A_k\}$ where $A_k$ contains arc $A_i$. If we cut arc $A_k$ at point $s_i$ converting
		it in two different arcs $(s_k,s_i - \epsilon)$ and $(s_i+\epsilon,t_k)$ we get an interval graph $G'$ with $n+1$ intervals. By Lemma ~\ref{lem:bound1}
		the number of edges of $G'$ is at most 2(n+1)-3 = 2n-1 edges. Since $G'$ has at least the same amount of edges that $G$ we conclude that $|E(G)| \leq 2n-1$.
		
		\item $i \ne j$: In this case $\mathcal{A}(p) = \{A_i,A_j\}$. If we replace $A_i$ by $(s_i,p-\epsilon)$ and $A_j$ by $(p + \epsilon, t_j)$, we get
		an interval graph $G'$ with $n$ vertices. However the edge that connects $A_i,A_j$ have been lost unless $A_i \cup A_j$ cover the entire circle.
		Therefore $|E(G')| \leq 2n-3$ and $|E(G)| \leq 2n-3+1 = 2n-2$.
	\end{enumerate}
	
	\vspace{3mm}
	\item $|\mathcal{A}(p)| = 3$: \\
	\begin{enumerate}
		\item $i=j:$ In this case $\mathcal{A}(p) = \{A_i,A_k,A_p\}$ where $A_k,A_p$ contain arc $A_i$. If we cut arcs $A_k,A_p$ at point $s_i$, converting
		each one into two different arcs, we get an interval graph $G'$ with $n+2$ intervals and at least one more edge since $A_k,A_p$ have
		been converted to two different edges. By Lemma ~\ref{lem:bound1} the number of edges of $G'$	 is at most $2(n + 2) - 3 = 2n + 1$, hence $|E(G)| \leq 2n$
		\item $i \ne j$: In this case $\mathcal{A}(p) = \{A_i,A_j,A_k\}$ where $A_k$ contains the segment $(s_i,t_j)$. 
		We can replace $A_i$ by ($s_i,p-\epsilon)$ and $A_j$ by $(p+\epsilon,t_j)$ (and may lose an edge unless $A_i \cup A_j$ covers the entire circle). 
		We cut the arc $A_k$ in either $s_i$ or $t_j$, any one works. We get an interval graph $G'$ with $n+1$ vertices and at least $|E(G)|-1$ edges. 
		By lemma ~\ref{lem:bound1} the number of edges of $G'$ is at most $2(n + 1) - 3 = 2n - 1$. Hence $|E(G)| \leq 2n$.
	\end{enumerate}
	
\end{itemize}

The graph $\overline{3K_2}$ has 6 vertices and 12 edges, hence it shows tightness of the bound.
Therefore any algorithm to solve DIM problem can first check the amount of edges to ensure the existence of a DIM.
Since $m \in O(n)$ then any $O(n+m)$ time algorithm for circular-arc graphs can be easily converted to an $O(n)$ time algorithm.
\end{proof}

A linear-time algorithm to solve MWEED for general graphs given a fixed dominating set was presented in \cite{DBLPMin}. If there exists a set of at most three arcs that covers the entire circle, then there is a dominating set of size at most $3$, thus the problem can be solved using the mentioned algorithm in linear time.
We assume a model $\mathcal{M}$ without a set $S$ that covers the entire circle such that $|S| \leq 3$.  Therefore $\mathcal{M}$ is a Helly circular-arc (HCA) model and the original graph $G$ is an $HCA$.

We analyze the model $\mathcal{M}$ (each case can be implemented in linear time):

\begin{description}
	\item \textbf{(a) $max_{p \in C}|\mathcal{A}(p)|\geq 4$:} Then $G$ is not $K_4$-free, hence it does not admit a DIM
	
	\item \textbf{(b) $min_{p \in C}|\mathcal{A}(p)|=0$:} Then $\mathcal{M}$ is an interval model.	Algorithm from \cite{LuKT02} can be applied
	
	\item \textbf{(c) $max_{p \in C}|\mathcal{A}(p)|=2$:} 
	
	%This case can be proved with a careful analysis.
	%We refer this case to ~\ref{ss1} of the appendix %and $min_{p \in C}|\mathcal{A}(p)|\geq 1$:} We refer this case to subsection ~\ref{ss1}

%%%%%%%%%%%%%%%%%%%%%%%%%%%% SS1	
	
%	\subsection{$max_{p \in C}|\mathcal{A}(p)| = 2$}\label{ss1} % and $min_{p \in C}|\mathcal{A}(p)| \geq 1$}\label{ss1}

Since $max_{p \in C}|\mathcal{A}(p)| > min_{p \in C}|\mathcal{A}(p)|$, then $min_{p \in C}|\mathcal{A}(p)| = 1$.
We can locate an induced cycle $C_{k \geq 4}$ on $G$ such that arcs from those vertices covers the entire circle $C$
in the model $\mathcal{M}$, and the rest of arcs from $A$ are pairwise-disjoint and are contained in exactly one arc from $C_k$.
If a vertex $v \in C_k$ is connected with a set of pendant vertices $W$, then $v$ should be colored with black in order to
get a coloring of the graph that represents a valid DIM.
Note that the only edge that belongs to a minimum weighted DIM will be an edge of minimum weight
between $v$ and $W$. Therefore, all edges between $v$ and $w \in W$ such that its weight is not minimum can be removed from the
graph $G$ in order to obtain $G'$. Every minimum weighted DIM from $G$ is a valid minimum weighted DIM in $G'$. 
Note that if exists $k$ edges with minimum weight, then any $k-1$ of those edges can be erased.
In the new graph $G'$ every vertex $v \in C_k$ contains at most one pendant vertex $w$ in his neighborhood.
Since HCA is an hereditary property, $G'$ is still an HCA graph, and every clique from $G'$ is a $K_2$, therefore $K(G') = L(G')$.
The graph $K(G')$ is HCA if $G'$ is HCA and it contains $O(n)$ vertices and $O(n)$ edges since maximum 
$\Delta(L(G')) \leq 4$.
The $K(G')$ model can be obtained within $O(n)$ time \cite{LinMSS08}. 
Therefore MWEED can be solved for $G'$, by solving MWEVD for $L(G')$, which can be done in linear time.

%%%%%%%%%%%%%%%%%%%%%%%%%%%% SS1

\item \textbf{(d) $max_{p \in C}|\mathcal{A}(p)|=3$:} 
%This case can be proved with a careful analysis.
	%We refer this case to  ~\ref{ss2} of the appendix

%%%%%%%%%%%%%%%%%%%%% SS2
%\subsection{$max_{p \in C}|\mathcal{A}(p)| = 3$}\label{ss2}

Denote $\mathcal{A}(p) = \{A_1, A_2, A_3\}$. Then $v_1,v_2,v_3$ form a triangle in $G$, which means that any DIM of $G$ must
contain exactly one of the three edges of this triangle: $\{v_1v_2,v_2v_3,v_3v_1\}$. By removing point $p$
from $C$ (every arc $(s,t)$ containing $p$ is now two disjoint arcs $(s,p-\epsilon)$ and $(p+\epsilon,t)$), we obtain an interval graph $G_p$, 
in which the vertices $v_1,v_2,v_3$ of $G$ correspond to two triangles formed by $u_1,u_2,u_3$ and $u_{n+1}, u_{n+2}, u_{n+3}$, respectively. So,
any DIM of $G_p$ must contain one edge of each of these triangles. Furthermore, our task would become
simpler if $u_1,u_2,u_3$ and $u_{n+1}, u_{n+2}, u_{n+3}$ would correspond to triangles having neither common nor adjacent
vertices. The latter condition is fulfilled because there are no two arcs that covers the entire circle.

\begin{lemma}\label{lt}
The triangles $\{u_1,u_2,u_3\}$ and $\{u_{n+1}, u_{n+2}, u_{n+3}\}$ have neither common nor adjacent vertices.
\end{lemma}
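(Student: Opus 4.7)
The plan is to verify the two assertions of the lemma separately. The ``no common vertices'' part is immediate: the six intervals $u_1,u_2,u_3,u_{n+1},u_{n+2},u_{n+3}$ are the six distinct pieces obtained by splitting the three arcs $A_1,A_2,A_3$ at $p$, so the two triangles are vertex-disjoint by construction.

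For the non-adjacency part, I would first formalize the unrolling. Cutting $C$ at $p$ turns the circle into a linear segment $L$ whose left endpoint corresponds to $p+\epsilon$ and whose right endpoint corresponds to $p-\epsilon$, with $|L|$ essentially equal to the circumference of $C$. Under this unrolling, each $u_i=(s_i,p-\epsilon)$ becomes an interval anchored at the right end of $L$ of length $p-s_i$, while each $u_{n+j}=(p+\epsilon,t_j)$ becomes an interval anchored at the left end of $L$ of length $t_j-p$. Hence the triangle $\{u_1,u_2,u_3\}$ clusters at the right extremity of $L$ and the triangle $\{u_{n+1},u_{n+2},u_{n+3}\}$ at the left extremity.

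Next, I would argue by contradiction: suppose that, for some $i,j\in\{1,2,3\}$, the intervals $u_i$ and $u_{n+j}$ share a point of $L$ (i.e., are adjacent in $G_p$). Because the two intervals are anchored at opposite endpoints of $L$, such an overlap forces $(p-s_i)+(t_j-p)\geq |L|$; that is, their combined length is at least the circumference of $C$. Translating back to the circle, the sub-arc $(s_i,p)\subseteq A_i$ together with the sub-arc $(p,t_j)\subseteq A_j$ trace a connected arc through $p$ of length at least $|C|$, which therefore covers all of $C$. In particular $A_i\cup A_j = C$, contradicting the standing assumption that no set of at most three arcs of $\mathcal{A}$ covers the entire circle.

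The main obstacle is pinning down the unrolling carefully enough that the ``opposite-endpoint anchoring'' is unambiguous and that the edge case $i=j$ does not require separate treatment; once this is in place, the lemma reduces to the one-line length comparison above combined with the hypothesis already used to reduce to the HCA setting.
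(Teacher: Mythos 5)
Your proposal is correct and follows essentially the same route as the paper: adjacency of $u_i$ and $u_{n+j}$ is shown to force $A_i\cup A_j=C$ (or $A_i=C$ when $i=j$), contradicting the standing assumption that no set of at most three arcs covers the circle. The paper states this via a bare two-case split ($i=j$ versus $i\neq j$), while you make the same point explicit through the unrolling and length comparison; the extra detail is harmless but not a different argument.
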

\begin{proof}
Clearly, $u_1,u_2,u_3, u_{n+1}, u_{n+2}, u_{n+3}$ are six different vertices. 
Hence there are no common vertex in both triangles. Suppose that $u_i$ is adjacent to $u_{n+j}$, $1\leq i,j\leq 3$. Consider the following cases.

\begin{description}
\item[(a) $i=j$,] which means $A_i$ covers the circle $C$, that is absurd.
\item[(b) $i\ne j$,] in this case, $A_i$ and $A_j$ cover the circle which is a contradiction.
\end{description}
\end{proof}

So, we now consider that the triangles $u_1,u_2,u_3$ and $u_{n+1},u_{n+2}, u_{n+3}$ contain neither common nor adjacent vertices.
We will apply algorithm from \cite{LuKT02} to $G_p$, which solves MWEED for chordal graphs (hence interval graphs).
The algorithm will be executed three times. In each of the applications, the graph $G_p$ remains the same, but the weighting
changes, to $\Omega_1$, $\Omega_2$ and $\Omega_3$, respectively.
In order to avoid edges of triangles $\{u_1,u_2,u_3\}$ and $\{u_{n+1},u_{n+2},u_{n+3}\}$ 
incident to $u_i$ or $u_{n+i}$ to be included in the DIM, in the weighting $\Omega_i$, $1 \leq i \leq 3$, 
we assign to each of these edges a high weight, for instance more than twice the sum of all weights of the edges of $G$. 
All the other edges of $G_p$ remain the same as in the corresponding edges of $G$. 
If the value of the minimum weighted DIM of $G_p$ is above that high weight assigned to the edges we want to avoid, then we know that
$G$ contains no DIM. Otherwise, we have solved our problem, and we only need to subtract the duplicated
weight among the edges of the triangles $u_1,u_2,u_3$ and $u_{n+1}, u_{n+2}, u_{n+3} $ which is part of the solution 
for $G_p$, but not for $G$.

The following are the formal definitions of the weights. Let
$A_1,A_2,A_3$ be the arcs containing $p$. Assign to $G_p$ the
weighting $\Omega_i$, $1 \leq i \leq 3$, which defines a weight
$\omega_i(u_ju_k)$ for each edge $u_ju_k \in E(G_p)$, as follows.
\bigskip
\\
For $1 \leq i,j \leq 3 \leq k \leq n$ and $u_ju_k \in E(G_p)$, \\
$\omega_i(u_ju_k) := \omega(v_jv_k)$
\\
\\
For $1 \leq i,j \leq 3 \leq k \leq n$ and $u_{n+j}u_k \in E(G_p)$, \\
$\omega_i(u_{n+j}u_k) := \omega(v_jv_k)$ 
\\
\\
For $1 \leq i,j,k \leq 3$, $i\ne j$, $j \ne k$ and $k \ne i$, \\
$\omega_i(u_ju_k) := \omega_i(u_{n+j}u_{n+k}) := \omega(v_jv_k)$
\\
\\
For $1 \leq i,j \leq 3$ and $i\ne j$, \\
$\omega_i(u_iu_j) := \omega_i(u_{n+i}u_{n+j}) := 1 + 2\sum \omega$
\\
\\

\begin{lemma} Let $max|\mathcal{A}(p)| = 3$, $\mathcal{A}(p) = \{A_1,A_2,A_3\}$. Then
$$dim_{\Omega}(G) = \left\{\begin{array}{cl}
\infty &
min_{1 \leq i \leq 3} \{dim_{\Omega_i}(G_p)\}> 2 \sum \omega \\
min_{1 \leq i \leq 3}\{dim_{\Omega_i}(G_p) - \omega_i(u_ju_k)\}^\dagger
 &
\mbox{\emph{otherwise}}
\end{array}\right.$$
$\dagger$ \emph{ where} $1 \leq j,k \leq 3; i\neq j \neq k \neq i$
\end{lemma}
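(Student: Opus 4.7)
The plan is to establish, for each $i \in \{1,2,3\}$ with $\{j,k\} = \{1,2,3\} \setminus \{i\}$, a weight-respecting correspondence between DIMs of $G$ using the triangle edge $v_jv_k$ and DIMs of $G_p$ under the weighting $\Omega_i$ of total weight at most $2 \sum \omega$. Since any DIM of $G$ must contain exactly one edge from $\{v_1v_2, v_2v_3, v_3v_1\}$, the minimum DIM-weight of $G$ decomposes as the minimum over the three cases $i = 1,2,3$, and the displayed formula follows directly.

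For the forward direction, given a DIM $M$ of $G$ using $v_jv_k$, I would set $M' := (M \setminus \{v_jv_k\}) \cup \{u_ju_k,\ u_{n+j}u_{n+k}\}$ and verify: (i) $M'$ is an induced matching in $G_p$ --- the two new edges live in the disjoint, non-adjacent triangles of Lemma \ref{lt}, and the remaining edges of $M$ keep their matching and inducedness properties since their endpoints are unaffected by the cut at $p$; (ii) every edge of $G_p$ is dominated exactly once by $M'$ --- edges of $G_p$ not touching split vertices inherit their unique dominator from $M \setminus \{v_jv_k\}$, while edges incident to the six split triangle vertices are dominated by one of the two new edges of $M'$. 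The identity $\omega_i(M') = \omega(M) + \omega(v_jv_k)$ is then immediate, since $v_jv_k$ is duplicated in $M'$ as two $\Omega_i$-low-weight edges while every other edge keeps its original weight.

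Conversely, any $\Omega_i$-DIM $M'$ of $G_p$ with weight at most $2 \sum \omega$ cannot use any of the high-weight edges $u_iu_j, u_iu_k, u_{n+i}u_{n+j}, u_{n+i}u_{n+k}$ (each has weight $1 + 2 \sum \omega$), so since each triangle of $G_p$ demands exactly one DIM edge, both $u_ju_k$ and $u_{n+j}u_{n+k}$ must be chosen; reidentifying the split vertices produces a DIM of $G$ using $v_jv_k$ of weight $\omega_i(M') - \omega_i(u_ju_k)$. The infinite case is then automatic: if $G$ had any DIM $M$, picking $i$ to match its triangle edge gives $dim_{\Omega_i}(G_p) \leq \omega(M) + \omega(v_jv_k) \leq 2\sum\omega$, so $\min_i dim_{\Omega_i}(G_p) > 2\sum\omega$ forces $G$ to admit no DIM. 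The main obstacle I foresee is the careful bookkeeping for the ``dominated exactly once'' condition: one must confirm that each $G_p$-edge incident to a split vertex receives one and only one dominator. This reduces to the observation that any arc $B \not\ni p$ lies entirely on one side of the cut, so every non-triangle $G$-edge $v_jB$ maps to a unique $G_p$-edge, while each triangle edge splits into exactly two $G_p$-copies, both covered by the two new edges of $M'$.
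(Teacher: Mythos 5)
Your proposal is correct and takes essentially the same route as the paper's own proof: both directions build the same correspondence (duplicate the chosen triangle edge $v_jv_k$ into $u_ju_k$ and $u_{n+j}u_{n+k}$, keep all other matching edges, use Lemma \ref{lt} for inducedness, and note that the prohibitive weight on edges at $u_i$ and $u_{n+i}$ forces both low-weight triangle edges in the converse), with the identical weight accounting $\omega_i(M_p)=\omega(M)+\omega(v_jv_k)$. The only cosmetic difference is that the paper argues the case $i=3$ explicitly while you keep the indices general.
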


\begin{proof}
Let $\mathcal{A}(p) = \{A_1,A_2,A_3\}$. Cut the circle at point
$p$ and consider the model ${\mathcal M}$ of the interval graph
$G_p$. For lemma \ref{lt}, $\{u_1,u_2,u_3\}$ and $\{u_{n+1},u_{n+2},u_{n+3}\}$ are triangles with
neither common nor adjacent vertices.

Suppose $G$ has a DIM $M$, weighted by
$\Omega$, having total weight $\omega' \leq 2 \sum \omega$. We
know that exactly one edge of the triangle $\{v_1,v_2,v_3\}$, say
$v_1v_2$, belong to $M$. Then choose the weighting $\Omega_3$ for
$G_p$ and consider the following subset of edges $M_p \subseteq
E(G_p)$.

$M_p = \{u_1u_2, u_{n+1}u_{n+2}\} \cup \{u_iu_j \in E(G_p) |
vf_iv_j \in M, 3 < i,j \leq n\} $
\\
We claim the $M_p$ is a DIM for $G_p$.
Since $M$ is a matching, $M_p$ is clearly so. Assume by contrary
that it is not induced, and let the violating edge be $u_iu_j$,
where $u_i,u_j$ are vertices incident to distinct edges of $M_p$.
Since $M$ is an induced matching of $G$, it follows that $i \in
\{1,2,3\}$ and $j \in \{n+1, n+2,n+3\}$. The latter means that
$u_1,u_2,u_3$ and $u_{n+1}, u_{n+2}, u_{n+3}$ have a common or an
adjacent vertex, a contradiction. Consequently, $M_p$ is indeed an
induced matching. It remains to show that it is dominating. Let
$S$ be the subset of vertices of $G_p$, not incident to the edges
of $M_p$. Let $u_i,u_j \in S$, $i \neq j$. Because $M$ is a
dominating matching, the only possibility for $u_i,u_j$ to be
adjacent is $i = 3$ and $j = n+3$, which contradicts $A_3$ not
covering the circle. Then $M_p$ is a DIM
having weight equal to $\omega' + \omega(v_1,v_2)$.

Conversely, assume that $G_p$ has a DIM
weighted by, say $\Omega_3$, having weight $ \leq 2 \sum \omega$.
We know that exactly one edge of each of the triangles
$\{u_1,u_2,u_3\}$ and $\{u_{n+1}, u_{n+2}, u_{n+3}\}$ belong to $M_p$.
Furthermore, the edges of these triangles which are incident
either to $u_3$ or $u_{n+3}$ all have weight $ > 2 \sum \omega$.
Therefore any dominating set of edges with weight $\leq 2 \sum
\omega$ contains the edges $u_1u_2$ and $u_{n+1}u_{n+2}$. Let

 $M = \{v_1v_2\} \cup \{v_iv_j \in E(G) | u_iu_j \in M_p, 3 < i,j \leq n\} $
\\
We claim that $M$ is a DIM of $G$.
Clearly, for any $v_i,v_j \in V(G)$, $1 \leq i \leq 3$ and $3 < j
\leq n$, $v_iv_j \in E(G)$ if and only if $u_iu_j \in E(G_p)$ or
$u_{n+i}u_j \in E(G_p)$. Consequently, $M_p$ being a DIM of $G_p$ implies that $M$ is a DIM of $G$. Furthermore, the weight of $M$ is precisely the
weight of $M_p$ less $\omega_3(u_1u_2)$, since it was counted
twice in $M_p$. The lemma follows. 
\end{proof}

%%%%%%%%%%%%%%%%%%%%% SS2

\end{description}
\vspace{-3mm}
\section{Minimun weighted perfect vertex Domination}
An $O(n+m)$ time algorithm to solve MWPVD for interval graphs was presented in \cite{ChangL94a}. The same paper shows the only known algorithm to solve MWPVD for circular-arc graphs in $O(n^2 + nm)$ time. Note that any efficient vertex dominating set is also a perfect vertex dominating set.

Given a circular-arc graph $G$, a circular-arc model $\mathcal{M}$ from $G$ can be obtained in $O(n+m)$ time and universal arcs can be
identified in $O(n)$ time. If a universal arc exists, we can solve the problem using the following procedure:

% from subsection ~\ref{approachMWPVD2} of the appendix. 

%%%%%%%%%%%%%%%%%%%%%%%%%% APPROACH MWPVD2

\subsection{Minimum weighted perfect vertex Domination}~\label{approachMWPVD2}

\begin{lemma}\label{universalVertex}
Given a graph $G=(V,E)$ where $u_1 \in V(G)$ is a universal vertex:
	\begin{enumerate}
		\item If $G$ contains another universal vertex $u_2 \ne u_1$, then the unique perfect vertex dominating sets of $G$ are: $V$, $\{u_i\}$ where $u_i$ may be any
		universal vertex.
		
		\item If $u_1$ is the unique universal vertex from $G$, then every perfect vertex dominating set from $G$ contains $u_1$. Moreover, each solution can be
		computed on the following way: Let $\{G_1 = (V_1,E_1), \ldots, G_k= (V_k,E_k)$ the connected components from $G \setminus \{u_1\}$.
		Then every perfect vertex dominating set $D$ from $G$ should verify:
		
			\begin{enumerate}%[(a)]
				\item A vertex $w \in V_i$ belongs to $D$ if and only if $V_i(G) \in D$	
				\item A subset $V' \subseteq V$ that verifies (a) and contains $u_1$ is a perfect vertex dominating set from $G$
			\end{enumerate}

	\end{enumerate}
\end{lemma}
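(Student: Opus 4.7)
The plan is to handle the two parts separately, exploiting the single fact that a universal vertex is adjacent to every other vertex, so it dominates every vertex outside any $D$.

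For part 1, I would first verify directly that $V$, $\{u_1\}$, and $\{u_2\}$ are each perfect dominating sets; this is immediate from universality. To show these are the only ones, I would take an arbitrary PDS $D \neq V$ and split on which of $u_1,u_2$ lie in $D$. If both belong to $D$, any $v \notin D$ has both $u_1$ and $u_2$ as dominators, forcing $D=V$, a contradiction. If $u_1 \in D$ but $u_2 \notin D$ (symmetric for the other), then $u_2$ is universal, so every vertex in $D$ dominates $u_2$; this forces $|D|=1$, hence $D=\{u_1\}$. If neither $u_i$ is in $D$, both $u_1$ and $u_2$ must be dominated exactly once, so any $w \in D$ is the unique dominator of both, forcing $|D|=1$; but then that single $w$ dominates every other vertex and is therefore universal, contradicting the assumption that $u_1,u_2$ are the only universal vertices.

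For part 2, I first argue $u_1 \in D$ for every PDS $D$. If $u_1 \notin D$, then since $u_1$ is adjacent to every element of $D$, the uniqueness condition forces $|D|=1$, say $D=\{v\}$; but then $v$ must dominate every other vertex, making $v$ universal, and by hypothesis $v=u_1$, contradicting $u_1 \notin D$. Next, for direction (a), suppose some component $V_i$ satisfies $V_i \cap D \neq \emptyset$ and $V_i \setminus D \neq \emptyset$. By connectedness of $G_i$ there is a path inside $V_i$ containing consecutive vertices $x_j \in D$ and $x_{j+1} \notin D$. Then $x_{j+1}$ is dominated both by $u_1 \in D$ and by $x_j \in D$, violating the perfect-domination condition; hence each $V_i$ is wholly inside or wholly outside $D$.

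For direction (b), given $V'$ containing $u_1$ and respecting the component condition, I would verify directly that $V'$ is a PDS: any $v \notin V'$ lies in some component $V_i$ with $V_i \cap V' = \emptyset$, and the neighbors of $v$ in $G$ are $u_1$ together with vertices inside $V_i$, so $u_1$ is the unique dominator of $v$ in $V'$. The main technical point is the path-walking argument in (a); the rest reduces to carefully bookkeeping which universal-induced adjacencies create forbidden double dominations. Once these structural observations are in place, the characterizations in both parts follow immediately.
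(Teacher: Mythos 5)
Your proposal is correct and follows essentially the same strategy as the paper: verify the obvious sets, then use the fact that a universal vertex outside $D$ (or a non-member vertex seeing two members of $D$) violates the uniqueness of domination; the connectivity/path argument for (a) and the neighborhood bookkeeping for (b) are the same as in the paper's proof.

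One small repair is needed in your third case of part 1 (neither $u_1$ nor $u_2$ in $D$). You correctly force $|D|=1$ and deduce that the single vertex $w$ is universal, but you then claim this "contradicts the assumption that $u_1,u_2$ are the only universal vertices" --- no such assumption exists in the statement, which only posits that $G$ has at least two universal vertices. The case is not contradictory: it simply produces $D=\{w\}$ for some universal $w$, which is already one of the sets listed in the conclusion ("$\{u_i\}$ where $u_i$ may be any universal vertex"). The paper disposes of all singletons at once by observing that a one-element perfect dominating set must be a universal vertex and is therefore among the listed solutions; adopting that observation closes the slip without changing anything else in your argument.
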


\begin{proof}
The proof is separated in two cases:
\begin{enumerate}
\item If $G$ contains universal vertex $u_2 \ne u_1$: \\
It is clear that $V$ and each universal vertex $\{u_i\}$ are perfect vertex dominating sets from $G$. 
Assume there exists another perfect vertex dominating set $D$. If $|D|=1$ then it should be a universal vertex and was one of the above mentioned sets.
Hence $D$ contains at least two vertices and there is at one least vertex $w \not\in D$. 
Since $|D|>1$ then every universal vertex $u_i$ should belong to $D$, otherwise the vertex $u_i$ will be dominated by more than one vertex inside $D$, 
which contradicts the definition of perfect vertex domination.
Moreover, every vertex $w \not\in D$ will be dominated by at least two universal vertices from $D$, then again, this contradicts the definition of the set $D$.
Therefore, there is not any other perfect vertex dominating set.

\item If $u_1$ is the unique universal vertex from $G$: \\
Assume exists a perfect vertex dominating set $D$ that does not contains $u_1$. It is clear that $D$ should have at least two vertices since
there is not another universal vertex, but then the vertex $u_1$ is dominated by $|D|>1$ vertices, which is absurd. Thus every perfect vertex dominating set
contains $u_1$.
Suppose $D$ is a perfect vertex dominating set such that $v \in D$, $w \not\in D$ and $v,w \in V_i$. 
Since $v,w$ belongs to the same connected component $G_i$, there exists a pair of adjacent vertices $v',w' \in V_i$ such that $v' \in D$ and $w' \not\in D$.
Thus $w'$ is dominated by $u_1$ and $v'$, which is a contradiction. 
Thus every perfect vertex dominating set satisfies (a).
Let $D$ be a subset of vertices that satisfies (a) and contains $u_1$ and $w \not\in D$.  
It is easy to see that $u_1$ dominates $w$. The set of vertices $N(w) \setminus \{u_1\}$ belongs to the same connected component. By (a), none of them
are in $D$. Therefore, $u_1$ is the unique vertex from $D$ that dominates $w$
\end{enumerate}
\end{proof}

If $G$ contains at least two universal vertices, then the amount of perfect vertex dominating sets is at most $O(n)$, and the one with minimum weight
can be obtained in linear time. 
In case $G$ contains exactly one universal vertex the minimum weighted perfect vertex dominating set can be obtained with the following procedure:

$D:= \{u_1\}$. We define the weight of a set of vertices as the sum of the weights of its vertices.
If the weight of a connected component $G_i$ of $G \setminus \{u_i\}$ is negative, then $D:= D \cup V_i$.

%%%%%%%%%%%%%%%%%%%%%%%%%% APPROACH MWPVD2

Thus we assume $G$ does not contain a universal vertex.

It is possible to solve MWEVD in linear time for circular-arc graphs. Hence we can save the best efficient vertex dominating set as a candidate solution, and
search for the perfect vertex dominating sets that are not efficient vertex dominating sets. We determine in $O(n)$ time the point $p$ such that $|\mathcal{A}(p)|$ is minimum, and according to this value a different approach can be used:

%%%%%%%%%%%%%%% APPROACHMPWVD

%\subsubsection{Minimum $|\mathcal{A}(p)|$ }\label{approachMWPVD}

\begin{description}

\item \textbf{(i) $|\mathcal{A}(p)|=0$,} In this case $\mathcal{M}$ is an interval model and $G$ an interval graph. 
A linear-time algorithm \cite{ChangL94a} can be applied to solve MWPVD.

\vspace{4mm}
 
\item \textbf{(ii) $|\mathcal{A}(p)|\ge 2$,} thus for every point $p' \in C$, $|\mathcal{A}(p')| > 2$.
Let $D$ be a perfect vertex dominating set which is not an efficient vertex dominating set, and with minimum weight. 
There should be two adjacent vertices $v,w \in D$, otherwise $D$ is an efficient vertex dominating set.
Let $A_v,A_w$ the corresponding arcs. Note that $A_v \cap A_w \ne \emptyset$, and let $q \in A_v \cap A_w$. 
For any arc $A_z \in \mathcal{A}(q)$ the vertex $z$ (corresponding to $A_z$) should belong to $D$, otherwise (otherwise it will be dominated by two vertices). 
Let $q' \in C$ the first point from $q$ in clock-wise order such that $\mathcal{A}(q') \ne \mathcal{A}(q)$ (hence $q$ and $q'$ belong to different segments). 
We will show that for any arc $A_z \in \mathcal{A}(q')$, the vertex $z$ corresponding to $A_z$ should be in $D$
We consider two cases according to segment finalization $q$.

\begin{itemize}
	\item If it has open-end at $t_u$, then $\mathcal{A}(q') = \mathcal{A}(q) \setminus \{A_u\}$. Thus, every arc 
	$A_z \in \mathcal{A}(q') \subset \mathcal{A}(q)$, the vertex $z$ corresponding to $A_z \in D$.
	
	\item If it has close-end at $s_u$, then $\mathcal{A}(q)=\mathcal{A}(q')\setminus \{A_u\}$. 
	Thus $\mathcal{A}(q')$ contains arcs from $\mathcal{A}(q)$ and an additional arc $A_u$. Since $A_u$ is adjacent to arcs from $\mathcal{A}(q)$ and 
	$|\mathcal{A}(q)|>1$ then $A_u$ should be in $D$, since $\mathcal{A}(q)$ belongs to $D$ and $A_u$ could not have more than one adjacent from $D$, unless is part of $D$.
\end{itemize}

Let $q: = q'$ and apply iteratively the same procedure $2n$ times, this is the amount of different segments. This shows that all vertices are in $D$.
Thus $D = V$, and the solution will be the best among $\sum_{v \in V}\omega(v)$ and the best solution previously found.

\vspace{4mm}

\item \textbf{(iii) $|\mathcal{A}(p)|=1$,}  In this case $\mathcal{A}(p)=\{A_v\}$ where $A_v=(s_v,t_v)$ corresponds to a vertex $v\in V$. 
Recall that $\mathcal{A}$ has no universal arc. 
We generate three interval models ($\mathcal{M}_1,\mathcal{M}_2$ y $\mathcal{M}_3$) from $\mathcal{M}$ and replace $A_v$ with 
$A_{v^-}=(s_v,p-\epsilon)$ and $A_{v^+}=(p+\epsilon,t_v)$, adding arcs for each one of them and modifying the weight function as
we describe next:

\begin{description}
\item \textbf{($\mathcal{M}_1$)} We replace $A_v$ with $A_{v^-}$ and $A_{v^+}$. We also add $A_{w^-}=(p-1.5*\epsilon,p-0.5*\epsilon)$ and $A_{w^+}=(p+0.5*\epsilon,p+1.5*\epsilon)$, assigning weights to the vertices that corresponds to new arcs: $\omega(v^-):=\omega(v^+):=0.5*\omega(v)$, $\omega(w^-):=\omega(w^+):=\infty$. 

\item \textbf{($\mathcal{M}_2$)} We replace $A_v$ with $A_{v^-}$ y $A_{v^+}$. We also add $A_{w^-}=(p-1.5*\epsilon,p-0.5*\epsilon)$, assigning weights to the vertices that corresponds to the new arcs: $\omega(v^-):=\omega(v^+):=\infty$, $\omega(w^-):=0$. 

\item \textbf{($\mathcal{M}_3$)} We replace $A_v$ with $A_{v^-}$ y $A_{v^+}$. We also add $A_{w^+}=(p+0.5*\epsilon,p+1.5*\epsilon)$, assigning weights to the vertices that corresponds to the new arcs: $\omega(v^-):=\omega(v^+):=\infty$, $\omega(w^+):=0$.
\end{description}

It is easy to see that described models are interval models. Hence MWPVD can be solved for those three interval models using a linear time algorithm from \cite{ChangL94a}.

We show how to map the perfect vertex dominating set $D_i$ from $\mathcal{M}_i$, $1 \leq i \leq 3$, to a perfect vertex dominating set $D$ from $G$ maintaining
the weight.
%TODO: Revisar estos items
\begin{itemize} 

\item The weight of $D_1$ is bounded, thus $w^-,w^+ \not\in D_1$. Since both are leaves, they should be dominated by their parents $v^-$ y $v^+$, respectively.
Then $v^-,v^+\in D_1$.
In this case, we take $D=(D_1\setminus \{v^-,v^+\})\cup\{v\}$. Note that the weight of $D$ and $D_1$ is the same since $\omega(v^-)+\omega(v^+)=\omega(v)$.

Each vertex from $v \in G$ such that $v \notin D$ is dominated by exactly one vertex from $D_1$. Each dominating vertex remains except for $v^-$ and $v^+$.
But vertices dominated by $v^-$ and $v^+$ are now dominated by $v$. Hence $D$ is a perfect vertex dominating set from $G$.

\item The weight of $D_2$ is bounded, hence $v^-,v^+ \not\in D_2$. Since $w^-$ is a leave $v^-\not\in D_2$, then $w^-\in D_2$.
The vertex $v^-$ is dominated by $w^-$, thus the rest of the neighbors of $v^-$ are not in $D_2$.

In this case, let $D=D_2\setminus \{w^-\}$. Note that $D$ and $D_2$ has the same weight since $\omega(w^-)=0$.

Each vertex $w\not\in D$ is dominated by exactly one vertex from $D$. All these vertices, except for $v$ were dominated by vertices from $D_2$ which remains
at $D$, except for $w^-$, but this vertex dominates only $v^-$ which is not a vertex from $G$.
$v^+$ is dominated by a vertex from $D=D_2\setminus \{w^-\}$. 
Clearly, this dominating vertex dominates $v$ and is the unique vertex that dominates $v$. As a consequence, $D$ is a perfect vertex dominating set from $G$.

\item The weight of $D_3$ is bounded, this case is symmetric to the previous case. Let $D=D_3\setminus \{w^+\}$ be a perfect vertex dominating set from $G$ with the same weight of $D_3$.
\end{itemize}
\end{description}

Now the prefect vertex dominating set $D$ from $G$ should be maped to a perfect vertex dominating set with the same weight of $D$ in some model $\mathcal{M}_i$, $1\leq i \leq 3$. 

The following describes the mapping:

\begin{itemize}
\item If $v \in D$, $D_1=(D\setminus \{v\}) \cup \{v^-,v^+\}$ in the model $\mathcal{M}_1$. Then $D_1$ and $D$ has the same weight.
$D_1$ is a perfect vertex dominating set in $\mathcal{M}_1$, unless there exists a vertex $z\not\in D_1$  such that is dominated by $v^-$ and $v^+$. 
In this case, the arcs  $A_v$ and $A_z$ (arc corresponding to $z$) covers the entire circle $C$. 
Clearly, $z\not\in D$ and is dominated by $v\in D$. But $D$ is a perfect vertex dominating set from $G$, thus $v$ is a universal vertex.
If this is not the case, then exists a vertex $w$ which is not adjacent to $v$ and the corresponding arc $A_w$ should be contained in $A_z \setminus A_v$,
but since $z$ is dominated by $v$, all the intersecting arcs with $z$ corresponds to vertices outside $D$. Then $w$ can not be dominated by any of the vertices
from $D$ because every intersecting arc of $A_w$ intersects $A_z$. Absurd. Therefore $v$ is a universal vertex, but it contradicts the hipothesis that no universal
vertices exists at $G$. Hence, it does not exists the vertex $z$ and $D_1$ is a perfect vertex dominating set from $\mathcal{M_1}$.

\item If $v \notin D$, $v$ is dominated by $z \in D$ and $t_v \in A_z$, where $A_z$ corresponds to vertex $z$.
It is clear that  $A_v\not\subset A_z$ since $p\in A_v\setminus A_z$. 
In this case, $D_2=D\cup \{w^-\}$. Again, $D$ and $D_2$ has the same weight. 
If $D_2$ is not a perfect vertex dominating set from $\mathcal{M}_2$, then $v^-$ is dominated by $w^- \in D_2$ and another vertex $u \in D_2$.
Moreover, $u \in D$ and dominates $v$ in $G$ but $v$ was dominated by $z \in D$. Hence $z=u$ and $A_z$ and $A_v$ covers the entire circle $C$.
Applying the same reasoning of the previous case we can conclude that $z$ is a universal vertex and contradicts the hypothesis that $G$ do not contain universal
vertices.
Therefore, $D_2$ is a perfect vertex dominating set in $\mathcal{M}_2$.

\item If $v\not\in D$, $v$ is dominated by $z\in D$ and $s_v\in A_z$. It's symmetric to the previous case, we obtain that $D_3=D\cup \{w^+\}$ is a perfect vertex dominating set in $\mathcal{M}_3$ with the same weight that $D$.
 
\end{itemize}

%%%%%%%%%%%%%%% APPROACHMPWVD

%(see \ref{approachMWPVD} of the appendix)

\section{Minimum weighted perfect edge domination}

We give an $O(n+m)$ time algorithm to solve MWPED for circular-arc graphs. 
To the best of our knowledge there is no known polynomial time algorithm  to solve this problem on circular-arc graphs. 
It is proved in \cite{LuKT02} the NP-completeness of unweighted version of this problem for bipartite graphs. 
%It is mentioned in \cite{LuKT02} MWPED, where NP-completness of unweighted version of this problem for bipartite graphs is proved. 

\begin{theorem}\cite{LuKT02}
There is an $O(n+m)$ time algorithm to solve MWPED on chordal graphs
\end{theorem}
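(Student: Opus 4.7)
The plan is to design a linear-time dynamic programming algorithm on a perfect elimination ordering (PEO) of the chordal graph $G$, exploiting rigid structural constraints that any perfect edge dominating set $F$ must satisfy on every maximal clique.

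First I would carry out a structural analysis of an arbitrary perfect edge dominating set $F$. Classify each vertex $v$ by $d_F(v) := |\{e \in F : v \in e\}|$ into \emph{free}, \emph{hit}, or \emph{saturated} (values $0$, $1$, or at least $2$). The defining condition $d_F(u) + d_F(v) = 1$ for every non-$F$ edge $uv$ then forces three strong facts: (i) every edge incident to a saturated vertex belongs to $F$; (ii) two adjacent hit vertices must be joined by an $F$-edge, which is the unique $F$-edge of each; and (iii) in every clique of size at least four, all vertices are saturated, so all clique-edges lie in $F$. In particular, the restriction of $F$ to any maximal clique of size at least four is completely determined, and within a triangle only $O(1)$ feasible assignments remain (either all three saturated, or one free plus two hit joined by an $F$-edge).

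Next I would set up the DP. Compute a PEO $v_1,\ldots,v_n$ in $O(n+m)$ time via maximum cardinality search or lexicographic BFS, recalling that each $N^+(v_i) := \{v_j : v_iv_j \in E,\ j>i\}$ is a clique. Process the vertices in reverse PEO order; at $v_i$ maintain a constant-size DP state recording the type of $v_i$ (free, hit, or saturated) together with, when $v_i$ is hit, whether its $F$-edge is committed to an already-processed later neighbor or still to be chosen among its earlier neighbors. Because of the structural constraints above, the joint labelling of the clique $\{v_i\}\cup N^+(v_i)$ admits only $O(1)$ feasible configurations, so the transition at $v_i$ runs in time $O(|N^+(v_i)|)$ and the total running time is $O(n + \sum_i |N^+(v_i)|) = O(n+m)$.

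The main obstacle is showing that the DP state at each vertex really does collapse to constant size without loss of optimality: the rigid propagation of saturation through overlapping maximal cliques, together with the constraint that two adjacent hit vertices force the edge between them into $F$, must combine in a way that is captured by purely local information at each PEO step. Once these structural lemmas are in place, verifying that the per-clique aggregation reconstructs a globally optimal $F$ in amortized $O(1)$ time per edge becomes routine bookkeeping.
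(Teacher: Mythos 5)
First, note that the paper does not prove this statement at all: it is quoted as a black box from the cited reference on perfect edge domination, so there is no internal proof to compare yours against. Judged on its own merits, your structural analysis is correct and in fact mirrors the (P1)/(P2) black--gray--white framework that the paper sets up later in its MWPED section: your saturated/hit/free trichotomy, the facts that a saturated vertex forces all incident edges into $F$, that two adjacent hit vertices must be matched to each other by an $F$-edge, and that every clique on at least four vertices is entirely saturated, are all sound and are exactly the rigidity such an algorithm must exploit.

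However, the proposal is not yet a proof, and the gap is precisely the step you yourself flag as ``the main obstacle'' and then defer as routine bookkeeping. The difficulty is real: the sets $\{v_i\}\cup N^+(v_i)$ are generally not maximal cliques, a single vertex $u$ occurs in $\{v_i\}\cup N^+(v_i)$ for every earlier neighbour $v_i$ of $u$, and an edge can lie in several maximal cliques, so neither the consistency of the labelling nor the weight accounting is local to one step of the sweep. In particular, if $u$ is hit with its $F$-edge still ``to be chosen among its earlier neighbours,'' those earlier neighbours are processed at different times and nothing in the described transition prevents two of them from claiming that edge, or none of them; likewise a free vertex needs \emph{all} of its neighbours, earlier and later, to end up hit, a condition your per-vertex state does not visibly enforce. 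The natural way to close the gap is to run the dynamic program on a clique tree rather than on single vertices: your own clique analysis shows that every bag admits only $O(1)$ feasible label patterns (all saturated, or else no saturated vertex, at most one free vertex and at most two mutually matched hit vertices, forcing the bag to have size at most three), so bag states augmented with an ``$F$-edge already placed'' flag for each hit vertex remain of constant size, and the fact that separators are cliques makes the consistency argument and the $O(n+m)$ bound go through. As written, that argument is asserted rather than proved, so the proof is incomplete.
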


\begin{corollary}\cite{LuKT02}
There is an $O(n+m)$ time algorithm to solve MWPED on interval graphs
\end{corollary}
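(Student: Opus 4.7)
The plan is extremely short, because this corollary is meant to be read as an immediate specialization of the preceding theorem. First I would recall the standard structural fact that every interval graph is chordal: this follows, for instance, from any of the classical interval representations, since a cycle of length $\geq 4$ in $G$ would force four arcs on the line whose intersection pattern induces a hole, which is impossible for a proper interval model.

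With the inclusion of interval graphs inside chordal graphs in hand, the proof consists of invoking the preceding theorem of \cite{LuKT02}, which provides an $O(n+m)$-time algorithm for MWPED on arbitrary chordal graphs. Given an interval graph $G$, one simply feeds $G$ into this algorithm; correctness and running time are inherited verbatim because the hypothesis of the theorem is satisfied.

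The only bookkeeping step I would mention is the input representation. If $G$ is given as an abstract graph, no conversion is needed since the chordal algorithm operates on $G$ directly. If $G$ is given by an interval model, one can recover $G$ in $O(n+m)$ time, so the overall complexity is preserved. There is no real obstacle here; the statement is essentially a restatement of the theorem once the class inclusion is observed.
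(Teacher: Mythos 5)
Your proposal is correct and matches the paper's (implicit) reasoning exactly: the corollary is obtained by observing that every interval graph is chordal and invoking the preceding theorem for chordal graphs verbatim. The only minor imprecision is your sketch of why interval graphs are chordal (the clean argument is simply that an induced cycle of length at least $4$ admits no interval representation), but the class inclusion is standard and the rest of the argument is sound.
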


\begin{definition}
Given a graph $G=(V,E)$ and a perfect edge dominating set $E' \subseteq E$ from $G$, we denote $D=\{v\in V: vw \in E'\}$ the vertices incident to an edge from $E'$. We can define the following 3-coloring for the vertices of $G$: 
The black vertices $B=\{v\in D: N[v] \subseteq D\}$, the gray vertices $R=D\setminus B$ and the white vertices $W=V(G)\setminus D$. 
\end{definition}

The following properties can be easily checked:

\begin{description}
\item \textbf{(P1)} Each gray vertex has exactly one non-white neighbor while the rest of his neighborhood are white vertices. (the gray vertex has degree at least 2)

\item \textbf{(P2)} If $v \in W$, then $N(v) \subseteq R$. Hence $W$ is an independent set.
\end{description}

It is easy to see that for any 3-coloring of vertices that satisfies properties (P1) and (P2) $E'=\{vw\in E: vw \in B \cup R\}$ is a perfect edge dominating set
and for any 3-coloring of $G$ that satisfies (P1) and (P2), if it contains $K_p$, with $p\geq4$, then vertices from $K_p$ should be black. 

Any efficient edge dominating set of a graph $G$ (if it exists), is also a perfect edge dominating set from $G$. 

Given a circular-arc graph $G=(V,E)$, we show how to solve MWPED in linear time. 
First, solve MWEED in $O(n)$ time. If there is a solution (DIM), we save the one with minimum weight as a candidate solution. Therefore the candidates that should be explored are the perfect edge dominating sets that are not DIM.
Note that the set $E$ is also a candidate solution.

We obtain circular-arc model $\mathcal{M}=(C,\mathcal{A})$ from $G$. We solve MWPED for $G$ according to the following properties from $\mathcal{M}$.

\begin{description}
\item \textbf{(a)} There are 2 arcs, $A_v=(s,v,t_v),A_w=(s,w,t_w)\in \mathcal{A}$ such that $A_v\cup A_w = C$. Let $E'\ne E$ be a perfect edge dominating set which is not a DIM. 

%We refer to ~\ref{MWPEDcaseA} of the appendix for the details.

%%%%%%%%%%%%%%%%%%%% MWPEDA

%\subsubsection{Case A}%\label{MWPEDcaseA}
It is clear that $E'$ determines a 3-coloring of the vertices from $V$ that verifies (P1) and (P2). Let $v$ and $w$ be the corresponding vertices to $A_v$ and $A_w$. 
The following possibilities form a valid color combination of $v$ and $w$ in order to satisfy (P1) and (P2):

\vspace{4mm}
\begin{description}

\item[(black,black).] It is clear that any other arc $A_z$ from the model that corresponds to the vertex $z \in V$ has common intersection with $A_v$ or $A_w$.

Using (P2), $z$ is not white. Hence, there is no white vertex in $V$. Therefore, $B=V$ and $E=E'$, absurd. This combination is not valid.

\vspace{4mm}

\item[(black,gray) and (gray,black).] 
Since both are symmetric, we consider one of them: (black,gray).
The vertex $w$ is gray and must verify (P1). The only non-white neighbor is $v$, hence the rest of his neighborhood should be white.
By (P2) the rest of neighbors of $v$ should not be white. Then there are no common vertices of $v$ and $w$.
Then for each vertex we know the color it must have, since every arc $A_z$ intersects with either $A_v$ or $A_w$.
If there is no conflict with the colors each vertex have and verifies (P1) and (P2), we can save this solution as one more candidate solution.
\vspace{4mm}

\item[(gray,gray).] Every vertex $z$ (different from $v$ and $w$) should be adjacent to $v$ and/or $w$.  
Since $v$ and $w$ are gray adjacent vertices and verify (P1), $z$ is a white vertex. Thus $E'$ contains the edge $vw$ only, thus $E'$ is a DIM, which 
contradicts the election of $E'$. Therefore this color combination is not valid.

\vspace{4mm}

\item[(gray,white) and (white,gray).] 

The vertex $v$ is gray and should verify (P1), hence it must have exactly one non-white neighbor $u$. 
On the other hand, vertices from $N(w)$ are gray because (P2). We analyze the following two cases: 
\vspace{4mm}

\begin{enumerate}
\item $A_u\subset A_v$: 
Every neighbor from $u$ is also neighbor from $v$, hence are all white. Vertices that are not neighbors from $v$ are neighbors from $w$ since their corresponding
arcs are contained at $A_w$ and should induce a matching since they satisfy (P2).
It is easy to see that $E'$ of this combination is exactly the edges from the induced matching plus the edge $vu$. Thus $E'$ is a DIM, which contradicts the 
election of $E'$. As a consequence, this combination is invalid for this case.
 
\item $A_u \cap A_w\ne\emptyset$, Then $u$ is a gray vertex (P2). By (P1) vertices from $N(u) \setminus \{v\}$ are white.
Vertices from $N(w)$ should be gray (P2). Thus there are no black vertices in $G$. Therefore $E'$ is a DIM. 
The combination is invalid for this case.
\end{enumerate}

Both combinations are invalid
\end{description}

According to the previous cases there are at most two extra solutions that should be compared to other candidate solutions. One of this solutions is given by 
taking $E'=E$, and the other is a DIM of minimum weight).

\vspace{4mm}

%%%%%%%%%%%%%%%%%%%% MWPEDA

\item[(b)] It does not satisfy (a) and there are three arcs $A_v,A_w,A_z\in \mathcal{A}$ such that $A_v\cup A_w\cup A_z = C$. Let $E'\ne E$ a perfect edge dominating set such that is not a DIM. 

%%%%%%%%%%%%%%% MWPEDB

%\subsubsection{Case B}\label{MWPEDcaseB}
It is clear that $E'$ determines a 3-coloring of $V$ that verifies (P1) and (P2). Let $v,w,z$ the corresponding vertices to $A_v,A_w,A_z$.
The possible combinations of colors that satisfies (P1) and (P2) are:

\vspace{4mm} 
\begin{description}
\item[(black,black,black).] 
In this combination, every arc $A_u$ should have non empty intersection with one of $A_v,A_w,A_z$, then $u$ is not white. Thus $B=V$ and $E'=E$. Contradiction. 
This combination is not valid.

\vspace{4mm}
\item[(white,gray,gray),(gray,white,gray),(gray,gray,white).] 
These three combinations are symmetric so we analyze one of them.
Assume $u$ is a black vertex. Then $A_u \cap A_v = \emptyset$ by (P2). Also $A_u \cap A_w = \emptyset$ and $A_u \cap A_z = \emptyset$ since $w$ and $z$ are
gray adjacent vertices that should satisfy (P1). Thus $A_u$ can not be anywhere and $u$ is not a black vertex. Hence $B = \emptyset$. 
It is clear that $E'$ is a DIM and therefore this combination is invalid.

\end{description}

None of these combinations is valid, hence the best solution is among the candidates $E=E'$ and DIM of minimum weight.

%%%%%%%%%%%%%%% MWPEDB

%We refer to ~\ref{MWPEDcaseB} of the appendix for the details.

\item[(c)] It satisfies neither (a) nor (b), in this case $\mathcal{M}$ is a Helly circular-arc model. 
%We refer to ~\ref{MWPEDcaseC} of the appendix for the details.

%%%%%%%%%%% MWPEDC

%\subsubsection{Case C: No set of two or three arcs cover the entire circle $C$}\label{MWPEDcaseC}

In this case $G$ is Helly circular-arc graph and $\mathcal{M}$ a Helly circular-arc model of $G$. 
We compute  $min_{p \in C}|\mathcal{A}(p)|$ and $max_{p \in C}|\mathcal{A}(p)|$  in order to analyze the different subcases:

\vspace{4mm}
\begin{description}
\item \textbf{(I) $min_{p \in C}|\mathcal{A}(p)|=0$:} Then $\mathcal{M}$  in an interval model and $G$ an interval graph, hence a chordal graph.
In \cite{LuKT02} there is a linear time algorithm for MWPED for chordal graphs.

\vspace{4mm}

\item \textbf{(II) $max_{p \in C}|\mathcal{A}(p)|=q\geq 4$:} Thus there is a $K_q$ formed by vertices that corresponds to arcs that belongs to $\mathcal{A}(p)$, 
where $|\mathcal{A}(p)|=q$.
Let $v_1,\dots,v_q$ the vertices and $A_{v_1}=(s_{v_1},t_{v_1}),\dots,A_{v_q}=(s_{v_q},t_{v_q})$ the corresponding arcs. 
The vertices $v_1,\dots,v_q$ must be black vertices for any 3-coloring that corresponds to a perfect edge dominating set. We can replace each arc
$A_{v_i}, 1\leq i\leq q$ with two arcs $A_{v^-_i}=(s_{v_i},p-\epsilon)$ and $A_{v^+_i}=(p+\epsilon,t_{v_i})$.
Clearly, the resulting model $\mathcal{M}^*$ is an interval model because the point $p$ do not belongs to any arc.
The intersection graph $G^*=(V^*,E^*)$ from $\mathcal{M}^*$ is an interval graph where each vertex $v_i, 1 \leq i \leq q$ is replaced by two non-adjacent 
vertices $v^-_i$ y $v^+_i$ and the neighbors of $v_i$ in $G$ ends being either neighbors of $v^-_i$ or neighbors of $v^+_i$ in $G^*$ since 
$\mathcal{M}$ do not contains two arcs that covers the entire circle $C$.
In addition, the vertices $v^-_1,\dots,v^-_q$ induces a $K_q$ in $G^*$ and the vertices $v^+_1,\dots,v^+_q$ induces another $K_q$ in $G^*$.
We can change the weight function $\omega$ to $\omega^*$ for $G^*$ for each $vw \in E$, in the following way:

$$\omega^*(vw) = \left\{\begin{array}{cl}
\omega(vw) & v,w\in V\setminus\{v_1,\dots, v_q\} \\
\omega(v_iw) & w\in V\setminus\{v_1,\dots, v_q\} _\wedge {v=v^-_i}, 1\leq i\leq q  \\
\omega(v_iw) & w\in V\setminus\{v_1,\dots, v_q\} _\wedge {v=v^+_i}, 1\leq i\leq q  \\
\omega(v_iv_j) & {v=v^-_i} _\wedge {w=v^-_j}, 1\leq i,j \leq q  \\
0 & \mbox{otherwise}
\end{array}\right.$$

We will show that each perfect edge dominating set $E'$ from $G$ corresponds to one perfect edge dominating set $E''$ from $G^*$ with the same weight.
Thus, the problem is reduced to solve MWPED for interval graph $G^*$ with the linear time algorithm \cite{LuKT02}.

Given a perfect edge dominating set $E'$ from $G$, we show how to generate $E''$. 
Let the 3-coloring of $G$ corresponding to $E'$, where $v_1,\dots,v_k$ are black vertices. 
In $G^*$ those vertices are replaced by $v^-_1,\dots,v^-_k$ and $v^+_1,\dots,v^+_k$.
The new 3-coloring for $G^*$ consists on assign as the black vertices to $v^-_1,\dots,v^-_k$ and $v^+_1,\dots,v^+_k$, while the original vertices maintain
the same color.
This coloring must satisfy (P1) and (P2) in $G^*$. 
Otherwise, assume (P1) is not satisfied by a vertex $w$. Clearly, $w \in G$ and verifies (P1).
Since white and gray vertices as their adjacencies remains in both graphs, then $w \in G^*$ contains exactly one black vertex $v_i$, $1 \leq i \leq q$ in $G$,
and in $G^*$ it has two black neighbors $v^-_i$ and $v^+_i$. This is a contradiction since in this case $A_w \cup A_{v_i}$ covers the entire circle $C$ from 
$\mathcal{M}$. Thus (P1) is satisfied.
On the other hand, every white vertex $w \in G$ verifies (P2). The vertices from $N(w)$ are gray and $N(w)$ remains the same at $G^*$, hence it satisfies (P2).
Thus the 3-coloring of $G^*$ corresponds to a perfect edge dominating set $E''$ from $G^*$. The sets $E'$ and $E''$ have the same weight.

Given a perfect edge dominating set $E''$ from $G^*$, there is a corresponding perfect edge dominating set $E'$ from $G$ with the same weight.

Clearly, in the 3-coloring corresponding to $E''$, every vertex $v^-_1,\dots,v^-_q$ and $v^+_1,\dots,v^+_q$ are black vertices because are part of $K_q$ with
$q \geq 4$. In order to generate the 3-coloring for $G$, assign $v_1,\dots,v_q$ as black vertices and let the remaining vertices with the same color.
It can be verified in a similar way that this 3-coloring verifies (P1) and (P2) from $G$, hence it corresponds to a perfect edge dominating set $E'$ from $G$.
Moreover the map of $E'$ and $E''$ is the same as the previous one.

\vspace{4mm}
\item \textbf{(III) $max_{p \in C}|\mathcal{A}(p)|=3$:}
This implies $G$ contains a $K_3$ formed by the vertices corresponding to $\mathcal{A}(p)$ where $|\mathcal{A}(p)|=3$. 

Let $v_1,v_2,v_3$ and their corresponding arcs $A_{v_1}=(s_{v_1},t_{v_1}),A_{v_2}=(s_{v_2},t_{v_2}),A_{v_3}=(s_{v_3},t_{v_3})$.
For every 3-coloring of $G$ that satisfies (P1) and (P2), those three vertices must be: (i) three black vertices or (ii) exactly one of them white and the other
two gray. 
We add an arc $(p-2*\epsilon, p+2*\epsilon)$ to the model, and call it $\mathcal{M}^+=(C,\mathcal{A}^+)$ and $G^+$ as his corresponding graph. 

It is easy to check that the 3-colorings of $G$ that satisfies (P1), (P2) and \textbf{(i)} have a one-to-one correspondence with 3-colorings of $G^+$ that
satisfies (P1) and (P2) since $v_1,v_2,v_3$ and the new vertex $s$ corresponding to the additional arc are black vertices since those four vertices form a $K_4$
and $s$ is not adjacent to any other vertex from $G^+$. 
In order to preserve the weight between the perfect edge dominating sets among $G$ and $G^+$, the new edges $v_1s, v_2s, $ and $v_3s$ must have weight 0.
Thus, MWPED can be solved for $G^+$ where the model $\mathcal{M}^+$ do not contain a set of two or three arcs that covers the entire circle $C$, is not an interval model and $max_{p \in C}|\mathcal{A}^+(p)|=4$, which can be solved using the previous case in linear time. An additional consideration for 3-colorings
of $G$ that satisfies (P1), (P2) and not (i) must be added.

For the 3-colorings that satisfies (P1), (P2) but not (i), each arc $A_{v_j}, 1 \leq i \leq 3$ can be replaced by two arcs $A_{v^-_i}=(s_{v_i},p-\epsilon)$ and $A_{v^+_i}=(p+\epsilon,t_{v_i})$ just as in (II), but instead of $q \geq 4$ arcs, now is just for 3 arcs.
Again, the model $\mathcal{M}^*$ is an interval model and the intersection graph $G^*=(V^*,E^*)$ is an interval graph where each vertex $v_i, 1 \leq i \leq 3$
is replaced by two non-adjacent vertices $v^-_i$ and $v^+_i$.

We define 3 different weighted functions $\omega_j, 1\leq j\leq 3$. For each edge $vw\in E^*$:
$$\omega_j(vw) = \left\{\begin{array}{cl}
\omega(vw) & v,w\in V\setminus\{v_1,\dots, v_3\} \\
\omega(v_iw) & w\in V\setminus\{v_1,\dots, v_3\} _\wedge {v=v^-_i}, 1\leq i\leq 3  \\
\omega(v_iw) & w\in V\setminus\{v_1,\dots, v_3\} _\wedge {v=v^+_i}, 1\leq i\leq 3  \\
\omega(v_kv_l) & v={v^-_k} _\wedge w={v^-_l} _\wedge \{k,l\}=\{1,2,3\}\setminus \{j\}\\
0 & v={v^+_k} _\wedge w={v^+_l} _\wedge \{k,l\}=\{1,2,3\}\setminus \{j\}\\
\infty & \mbox{otherwise}
\end{array}\right.$$

It can be easily checked that any perfect edge dominating set $E'$ from $G$ has a corresponding 3-coloring that satisfies (ii).

A 3-coloring of $G^*$ can be obtained from the previous one, by keeping the same colors for common vertices among both graphs, and ${v^-_i},{v^+_i}$ get colors from $v_i$, $1\leq i\leq 3$.
The 3-coloring obtained corresponds to a perfect edge dominating set $E''$ from $G^*$. If the 3-coloring of $G$ contains the vertex $v_j$, $j \in \{1,2,3\}$ with color white then the weight of $E'$ using $\omega$ is the same weight that $E''$ using $\omega_j$. 
%Also the three-coloring from $G^*$ will be a threecoloring from $G$
Thus, the MWPED problem can be solved three times for the interval graph $G^*$, each time with a different function $\omega_j$, $1 \leq j \leq 3$, applying
linear algorithm from \cite{ChangL94a}.
If the best of these solutions has bounded weight then the solution forms a perfect edge dominating set of minimum weight in $G$ that satisfies (ii).
Otherwise, there is no perfect edge dominating set of $G$ whose three-coloring satisfies (ii). 
The minimum perfect edge dominating set from $G$ is the minimum returned from the solutions given by \cite{ChangL94a} for $G^*$ , the solution from $G^+$, $E$ and
the MWEED for $G$. All of them can be obtained in linear time.

\vspace{4mm}

\item \textbf{(IV) $max_{p \in C}|\mathcal{A}(p)|=2$ y $min_{p \in C}|\mathcal{A}(p)|\geq 1$:} Since $max_{p \in C}|\mathcal{A}(p)|>min_{p \in C}|\mathcal{A}(p)|$, then $min_{p \in C}|\mathcal{A}(p)|=1$. 
In this case, we can find the induced cycle $C_k$ with $k \geq 4$ of $G$ such that the arcs corresponding to the vertices covers the circle $C$ in $\mathcal{M}$
and the rest of arcs from $\mathcal{A}$ are pairwise disjoint and are contained in exactly one of the arcs from $C_k$.
The vertices from $C_k$ are $v_1,v_2,\dots, v_k$, $v_iv_{i+1}\in E$, $1\leq i\leq k-1$, y $v_kv_1\in E$, 
and the corresponding arcs of $v_1,v_2,\dots, v_k$ are $A_1=(s_1,t_1),A_2=(s_2,t_2),\dots,(s_k,t_k)$ and the circular order according to the extreme is:
$s_1,t_k,s_2,t_1,s_3,t_2,\dots,s_k$.

We analyze the following cases:

\begin{enumerate}

\item $G$ is $C_j$, thus $k=n$. In this case $L(G)$ is exactly$G$. Instead of a weighted function over the arcs, the weighted function is over the vertices.
(Note that $C_n$ is a circular-arc graph)

\item There exists $v_i$ with leaves. Without lost of generality, we assume $i=2$ and $w_1, \dots, w_{d(v_2)-2}$ are their leaves, $d(v_2) \geq 3$ is the degree of $v_2$.
In this case, the vertex $v_2$ is father of $w_1$, in any other 3-coloring corresponding to a perfect edge dominating set of $G$, $v_2$ is not white since otherwise $w_1$ would be gray but could not have a non white neighbor to satisfy (P1). Then $v_2$ must be black or gray.

\begin{enumerate}

\item $v_2$ is black:
Then $N(v_2)$ can not be all white, so $w_1,\dots,w_{d(v_2)-2}$ can not be white. But at the same time this set can not be gray because are adjacent only to $v_2$, and has no white vertices to satisfy (P1). Then $w_1, \dots, w_{d(v_2)-2}$ are black vertices. To solve this subcase, we can alterate the model in the following way:
Add two identical arcs  $(t_1,s_3)$. In this new model, any leave $w_j$, $v_2$ and the two new corresponding vertices to the additional arcs form a $K_4$, thus in 
any 3-coloring, these vertices must be black. But these two new vertices are adjacent only to $v_2$ and his leaves. Hence the remaining vertices can have any coloring
independent of these  new vertices. Therefore the perfect edge dominating sets from $G$ in this subcase corresponds one-to-one to the modified model.
The modified model do not contain a set of 2 or 3 arcs that cover the circle $C$ and there is a point $p$ such that is contained in 4 different arcs.
Algorithm (II) can be used to solve it.
By assigning weight 0 to the added arcs then the solutions can be mapped and the weight will be the same.

\item $v_2$ is gray:
Then it contains exactly one non-white vertex which must be one of $v_1, v_3, w_1,\dots,w_{d(v_2)-2}$ , and the rest are white vertices.
Assume $w_i$ is the non-white vertex, then it is black because there is no other neighbor, thus this 3-coloring satisfies (P1) and (P2).
A new 3-coloring can be obtained, swapping color of $w_i$ with the color of another leave $w_j$. 
The new coloring will hold properties (P1) and (P2), hence it is convenient to choose the leave $w_j$ such that $\omega(v_2w_j)$ is minimum.
Therefore there are 3 candidates to be considered as the non-white neighbor of $v_2$. These candidates are $v_1,v_3$ and the best $w_j$.
For each case we may alter the model, adding an arc that intersects only to $A_2$ and the arc corresponding to the non-white vertex chosen.
The new added edges have weight $\infty$ in order to prevent that in the new model the added vertex is chosen as the non-white vertex of $v_2$.
This forces to choose the same non-white vertex from the original model. 
The new vertex has no other neighbors, hence it does not affect the coloring of the rest of the vertices. The resulting model has a point $p$ which is contained
by 3 different arcs, then the linear time algorithm (III) can be used to solve MWPED on this new model.
The algorithm must be run 3 times, once for each election of the non-white vertex for $v_2$.
\end{enumerate}

It applies at most once the algorithm from (II) and three time the algorithm from (III), hence the algorithm time complexity is $O(n+m)$.
\end{enumerate}
\end{description}

%%%%%%%%%%% MWPEDC

\end{description}

%
%\begin{thebibliography}{10}\label{bibliography}
%
%
%%\bibitem{r} Roscoe, A. W., ``The Theory and Practice of Concurrency,''
%%  Prentice Hall Series in Computer Science, Prentice Hall Publishers,
%%  London, New York (1198), 565pp. With associated web site\\  
%%  \href{http://www.comlab.ox.ac.uk/oucl/publications/books/concurrency/}
%%  {\texttt{http://www.comlab.ox.ac.uk/oucl/publications/books/concurrency/}}.
%%  
%\bibitem{s} Shehadah, A. A., ``Embedding theorems for semigroups with
%  involution, `` Ph.D.  thesis, Purdue University, Indiana, 1982.
%  
%\bibitem{w} Weyl, H., ``The Classical Groups,'' 2nd Ed., Princeton U.
%  Press, Princeton, N.J., 1946.
%
%\end{thebibliography}
\bibliographystyle{acm} % bibliography style - recommend using apalike-doi as it hyperlinks DOIs
%\bibliography{references} % references.bib included in the references directory

%\begin{appendix}
%\section{Appendix}
%\subsection{Circular-Arc graphs}\label{CAresults}

%\subsection{Minimum weighted perfect edge domination}

%\end{appendix}
\end{document}